\def \z{\zeta}
\theoremstyle{plain}
\newtheorem{theorem}{Theorem}[section]
\newtheorem{definition-theorem}[theorem]{Definition-Theorem}
\newtheorem{definition-proposition}[theorem]{Definition-Proposition}
\newtheorem{example}{Example}[section]
\newtheorem{examples}{Example}[subsection]
\newtheorem{remark}{Remark}[section]
\theoremstyle{definition}
\newtheorem{definition}{Definition}[section]
\numberwithin{equation}{section} 
\DeclareMathOperator{\tr}{tr}
\DeclareMathOperator{\Span}{span}
\DeclareMathOperator{\cyc}{cyc}
\DeclareMathOperator{\aut}{aut}
 \def\script{\scriptstyle}
\def\ra{{\rightarrow}}
\def\tr{\mathrm {tr}}
\def\det{\mathrm {det}}
\def\Nor{\mathrm {Nor}}
\def\diag{\mathrm {diag}}
\def\res{\mathop{\mathrm {res}}\limits}
\def\be{\begin{equation}}
\def\ee{\end{equation}}
\def\bea{\begin{eqnarray}}
\def\eea{\end{eqnarray}}
\def\bt{\begin{theorem}}
\def\et{\end{theorem}}
\def\bex{\begin{example}\small \rm}
\def\eex{\end{example}}
\def\bexs{\begin{examples}\small \rm}
\def\eexs{\end{examples}}
\def\ra{\rightarrow}
\def\br{\begin{remark}\small \rm}
\def\er{\end{remark}}
\def\res{\mathop{\mathrm{res}}\limits}
\def\&{&{\hskip -20pt}}
\def\CC{\mathcal{C}}
\def\DD{\mathcal{D}}
\def\LL{\mathcal{L}}
\def\MM{\mathcal{M}}
\def\WW{\mathcal{W}}
\def\Cb{\mathbf{C}}
\def\Ib{\mathbf{I}}
\def\Nb{\mathbf{N}}
\def\Nb{\mathbf{N}}
\def\Pb{\mathbf{P}}
\def\Zb{\mathbf{Z}}
\def \d {\mathrm d}
\begin{document}
\baselineskip 16pt
\medskip
\begin{center}
\begin{Large}\fontfamily{cmss}
\fontsize{17pt}{27pt}
\selectfont
	\textbf{Rationally weighted Hurwitz numbers, \\ Meijer $G$-functions  and matrix integrals}
	\end{Large}
\\
\bigskip \bigskip
\begin{large}  M. Bertola$^{1, 2, 3}$\footnote{e-mail: Marco.Bertola@concordia.ca, Marco.Bertola@sissa.it} and J. Harnad$^{1, 2}$\footnote {e-mail: harnad@crm.umontreal.ca  }
 \end{large}\\
\bigskip
\begin{small}
$^{1}${\em Department of Mathematics and Statistics, Concordia University\\ 1455 de Maisonneuve Blvd.~W.~Montreal, QC H3G 1M8  Canada}\\
\smallskip
$^{2}${\em Centre de recherches math\'ematiques, Universit\'e de Montr\'eal, \\C.~P.~6128, succ. centre ville, Montr\'eal, QC H3C 3J7  Canada}\\
 \smallskip
$^{3}${\em SISSA/ISAS, via Bonomea 265, Trieste, Italy }
\smallskip
\end{small}
\end{center}

\begin{small}\begin{center}
\end{center}\end{small}
\medskip
\begin{abstract}
The quantum spectral curve equation associated to KP $\tau$-functions of hypergeometric type serving as generating functions for rationally weighted Hurwitz numbers is solved by generalized hypergeometric series.  The basis elements spanning the  corresponding  Sato Grassmannian element are shown to be Meijer $G$-functions, or their asymptotic series. Using their Mellin integral representation  the $\tau$-function, evaluated at the  trace invariants of an externally coupled matrix, is expressed as  a matrix integral.  \end{abstract}

\section{Introduction}

Several instances of  $\tau$-functions of hypergeometric type are known to be generating functions for
weighted Hurwitz numbers \cite{GH1, GH2, HO, H1}, which enumerate branched covers of the Riemann sphere with specified 
branching profiles. These are distinguished by the choice of a weight generating function $G(z)$ that selects the 
weighting attributed to various possible branching configurations. 

In particular, such $\tau$-functions serve as generating functions for  {\em simple} Hurwitz numbers \cite{Ok, OP} 
(both single and double),  {\em weakly monotonic} (or, equivalently, {\em signed})
Hurwitz numbers (\cite{GGN1, HO, GH2}),   {\em strongly monotonic} Hurwitz numbers \cite{GH1, GH2}, 
weighted Hurwitz numbers for Belyi curves, or {\em dessins d'enfants} \cite{GJ, AC1, KZ, Z},
polynomially weighted Hurwitz numbers \cite{AC1, AC2, AMMN, HO, ACEH1, ACEH2, ACEH3,  AC3}; 
{\em quantum} Hurwitz numbers \cite{H1, GH2, H2} and  {\em multispecies} Hurwitz numbers \cite{H3}. 

Of these,  simple Hurwitz numbers, both weakly  and strongly monotonic Hurwitz numbers and
polynomially weighted Hurwitz numbers are all known to admit  matrix integral 
representations  \cite{Or, BM, BEMS, AMMN, GGN1, AC1, AC2, AC3}.
In this work, we derive a new matrix integral representation, of a different type, for hypergeometric KP $\tau$-functions
 generating weighted Hurwitz numbers with rational weight generating function,
which include several of the above as special cases.

In Section \ref{gen_fns_weighted_Hurwitz}, the notion of weighted Hurwitz numbers \cite{GH1, HO, H1, GH2},
  is recalled and the  KP $\tau$-function serving as generating function for these is defined, focussing on the case  of rational 
 weight generating functions
 \be
G_{{\bf c}, {\bf d}}(z) :={ \prod_{i=1}^L (1+c_i z) \over \prod_{j=1}^M(1-d_j z)}.
\label{G_ratl_c_d_intro}
\ee
We define a natural basis $\{\phi_k^{({\bf c},{\bf d},\beta)} (x) \}_{k\in \Nb^+}$ 
 for the corresponding element of the infinite Sato Grassmannian, 
consisting of Laurent series  obtained by dividing generalized hypergeometric series of the type
\be
\leftidx{_L}{F_M}\left( {1-k+{1 \over \beta c_1}, \cdots, 1-k+{1\over \beta c_L} \atop
1-k-{1 \over \beta d_1}, \cdots,1-k- {1 \over \beta d_M}}  \bigg{|}  \kappa_{{\bf c}, {\bf d}} x \right), \quad k \in \Zb
\ee
by the monomials $x^{k-1}$. These are either convergent or formal, depending on whether $L \le M+1$ or $L>M+1$.
 They satisfy the sequence of eigenvalue equations with nonnegative integer eigenvalues,
 \be
\LL_{{\bf c} ,{\bf d}, \beta} \phi_k^{({\bf c},{\bf d},\beta)}  := (x G_{{\bf c}, {\bf d}}(\beta\DD) - \DD)\phi_k^{({\bf c},{\bf d},\beta)}   = (k-1)\phi_k^{({\bf c},{\bf d},\beta)}, \quad k \in \Zb,
\label{phi_k_eq_intro}
\ee
 where $\DD=x {d\over dx}$ is the Euler operator and $\LL_{{\bf c} ,{\bf d}, \beta}$  the quantum spectral curve operator. The first of  these  is the quantum curve equation appearing in the topological recursion approach \cite{ACEH1, ACEH2, ACEH3}.  
  
Theorem \ref{phi_k_G_function},  Section \ref{quantum_curve_solutions}, relates
the basis elements $\{\phi_k^{({\bf c},{\bf d},\beta)} (x) \}_{k\in \Nb^+}$
 to Barnes-Mellin type integrals 
 \be
\tilde{\phi}_k^{({\bf c},{\bf d},\beta)} (x) \sim  \int_{\CC_k}  \frac {\Gamma(1-k-s)  \prod_{\ell=1}^L \Gamma \left( s + \frac 1 {\beta c_\ell} \right) 
\left(-\kappa_{{\bf c}, {\bf d}} x\right)^s}{\prod_{m=1}^M \Gamma\left( s - \frac 1{\beta d_m}  \right)}  ds 
\label{phi_tilde_k_integ_intro}
\ee
 defining Meijer $G$-functions, either as convergent sums over the residues at the poles at $s=1-k, 2 -k, \cdots$ (when $L\le M+1$) or as asymptotic series (when $L> M+1$).   In Section \ref{tau_trace_invars} a finite determinantal formula  (\ref{tau_phi_i_det}) is derived for the $\tau$-function  with
KP flow parameters evaluated at the trace invariants of an $n$-dimensional matrix $X$.  
This leads to Theorem \ref{hypergeom_tau_matrix_integral},  expressing the $\tau$-function as an externally coupled matrix 
integral of generalized Br\'ezin-Hikami  \cite{BH} type.

\section{Generating functions for weighted Hurwitz numbers}
\label{gen_fns_weighted_Hurwitz}

\subsection{Pure and weighted Hurwitz numbers}

We recall the definition of {\em pure} Hurwitz numbers \cite{Frob1, Frob2, Sch, Hu1, Hu2, LZ}.
\begin{definition}[Combinatorial]
For a  set of $k$ partitions $\{\mu^{(i)}\}_{i=1,\dots, k}$ of $N\in \Nb^+$, the {\em pure} Hurwitz number 
$H(\mu^{(1)}, \dots, \mu^{(k)})$  is ${1\over N!}$ times the number of distinct ways that the identity element $\Ib_N \in S_N$ 
in the symmetric group in $N$ elements can be expressed as a product
\be
\Ib_N = h_1, \cdots h_k
\ee
of $k$ elements $\{h_i\in S_N\}_{i=1, \dots, k}$,  such that for each $i$, $h_i$ belongs
to the conjugacy class $\cyc(\mu^{(i)})$ whose cycle lengths are equal to the parts of $\mu^{(i)}$:
\be
h_i \in \cyc(\mu^{(i)}), \quad i =1, \dots, k.
\ee
\end{definition}

An equivalent definition consists of enumeration of branched coverings of the Riemann sphere.
\begin{definition}[Geometric]
For a set of partitions $\{\mu^{(i)} \}_{i=1,\dots, k}$ of weight $|\mu^{(i)}|=N$, 
the pure Hurwitz number  $H(\mu^{(1)}, \dots, \mu^{(k)})$ is defined geometrically  \cite{Hu1, Hu2}  as the number
of inequivalent $N$-fold branched coverings  $\CC \ra \Pb^1$  of the Riemann sphere with $k$ branch points $(Q^{(1)}, \dots, Q^{(k)})$, 
whose ramification profiles are given by the partitions $\{\mu^{(1)}, \dots, \mu^{(k)}\}$, 
normalized by the inverse  $1/|\aut (\CC)|$ of the order of the automorphism group of the covering. 
\end{definition}
The equivalence of these two definitions follows \cite{LZ} from  the monodromy homomorphism 
\be
\MM:\pi_1 (\Pb^1/\{Q^{(1)}, \dots, Q^{(k)})\} \ra S_N
\ee
from the fundamental group of the Riemann sphere punctured at the branch points into $S_N$,
 obtained by lifting closed loops from the base to the covering.
 
To define {\em weighted} Hurwitz numbers \cite{GH1, GH2, HO, H1}, we introduce a weight generating function $G(z)$, either as 
an infinite product
\be
G(z) =\prod_{i=1}^\infty (1 + c_i z)
\label{G_z_prod}
\ee
or an infinite sum
\be
G(z) = 1 + \sum_{i=1}^\infty g_i z^i,
\label{G_z_taylor}
\ee
either formally, or under suitable convergence conditions imposed upon the parameters.
Alternatively, it may be chosen in the dual form
\be
\tilde{G}(z) = \prod_{i=1}^\infty (1-c_i z)^{-1},
\label{G_tilde_z_prod}
\ee
which may also be developed as an infinite sum, 
\be
\tilde{G}(z)=  1 + \sum_{i=1}^\infty \tilde{g}_i z^i.
\label{G_tilde_z_taylor}
\ee

The independent parameters determining the weighting may be viewed as either $\{g_i\}_{i\in \Nb^+}$,
$\{\tilde{g}_i\}_{i\in \Nb^+}$ or $\{c_i\}_{i\in \Nb^+}$.
They are related by the fact that (\ref{G_z_prod}) and (\ref{G_tilde_z_prod}) are generating functions for 
elementary and complete symmetric functions, respectively,
\be
g_i = e_i({\bf c}), \quad \tilde{g}_i = h_i({\bf c}),
\ee
in the parameters ${\bf c} =(c_1, c_2, \dots)$.

In much of the following, we consider combined rational weight generating functions 
\be
G_{{\bf c}, {\bf d}}(z) :={ \prod_{i=1}^L (1+c_i z) \over \prod_{j=1}^M(1-d_j z)},
\label{G_ratl_c_d}
\ee
whose Taylor series expansion coefficients are
\be
g_i({\bf c}, {\bf d}) = \sum_{j=0}^i e_j({\bf c}) h_{i-j} ({\bf d}),
\label{g_i_coeffs_ratl}
\ee
where ${\bf c} =(c_1, \dots, c_L)$, ${\bf d} =(d_1, \dots, d_M)$. 

\begin{definition}[Weighted Hurwitz numbers]
For the case of a weight generating function expressed in the form (\ref{G_z_prod}), choosing a nonnegative integer $d$ and a fixed partition $\mu$
of weight  $|\mu| =N$, the weighted (single) Hurwitz number $H^d_G(\mu)$ is defined as the weighted sum over all $k$-tuples $(\mu^{(1)}, \dots, \mu^{(k)})$
\be
H^d_G(\mu) := 
\sum_{k=1}^d \sum_{\mu^{(1)}, \dots \mu^{(k)}, \  |\mu^{(i)}| =N  \atop \sum_{i=1}^k \ell^*(\mu^{(i)}) =d} 
\WW_G(\mu^{(1)}, \dots, \mu^{(k)}) H(\mu^{(1)}, \dots, \mu^{k)}, \mu)
\label{H_d_G_def}
\ee
where
\be
\ell^*(\mu^{(i)}) := |\mu^{(i)}| - \ell(\mu^{(i)})
\ee
is the {\em colength} of the partition $\mu^{(i)}$, and the weight factor is defined to be
\bea
 \WW_G(\mu^{(1)}, \dots, \mu^{(k)}) &\&:=
 {1\over k!}
 \sum_{\sigma \in S_{k}} 
 \sum_{1 \leq b_1 < \cdots < b_{k } \leq M} 
  c_{b_{\sigma(1)}}^{\ell^*(\mu^{(1)})} \cdots c_{b_{\sigma(k)}}^{\ell^*(\mu^{(k)})} \cr
&\& = {|\aut(\lambda)|\over k!} m_\lambda ({\bf c}).
 \label{WG_def}
\eea
Here  $m_\lambda ({\bf c}) $ is the monomial symmetric function \cite{Mac} of the parameters ${\bf c}:= (c_1, c_2, \dots)$
\be
m_\lambda ({\bf c}) = {1\over |\aut(\lambda)|}\sum_{\sigma \in \mathfrak{S}_{k}} \sum_{1 \leq b_1 < \cdots < b_{k }}
 c_{b_{\sigma(1)}}^{\lambda_1} \cdots c_{b_{\sigma(k)}}^{\lambda_{k}} ,
  \label{m_lambda}
\ee
 indexed by the partition $\lambda$ of weight $|\lambda|=d$ and length  $\ell(\lambda) =k$, whose 
parts $\{\lambda_i\}$  are equal to the colengths $\{\ell^*(\mu^{(i)})\}$ (expressed in weakly decreasing order) 
\be
\{\lambda_i\}_{i=1, \dots k} \sim \{\ell^*(\mu^{(i)})\}_{i=1, \dots k},  \quad \lambda_1 \ge \cdots \ge \lambda_k >0
\label{lambda_colengths_mu}
\ee
and
 \be
 |\aut(\lambda)|:=\prod_{i\geq 1} m_i(\lambda)!
 \ee
where $m_i(\lambda)$ is the number of parts of $\lambda$ equal to~$i$.  A similar definition applies for weight generating functions
$\tilde{G}(z)$ of the dual form (\ref{G_tilde_z_prod}), with the monomial symmetric functions (\ref{m_lambda}) appearing in
(\ref{H_d_G_def}) replaced by the {\em forgotten} symmetric functions  \cite{Mac}
\be
f_\lambda({\bf c}) := {(-1)^{\ell^*(\lambda)}\over |\aut(\lambda)|}\sum_{\sigma \in \mathfrak{S}_{k}} \sum_{1 \leq b_1 <\le\cdots \le b_{k }}
 c_{b_{\sigma(1)}}^{\lambda_1} \cdots c_{b_{\sigma(k)}}^{\lambda_{k}} ,
  \label{f_lambda}
\ee

For the case of rational weight generating functions $G_{{\bf c}, {\bf d}}(z)$,
the weighted (single) Hurwitz numbers are
\bea
H^d_{G_{{\bf c}, {\bf d}}}(\mu)&\& := 
\sum_{1\le k, l \atop k+l \le d} \sum_{{\mu^{(1)}, \dots \mu^{(k)} ,\nu^{(1)}, \dots \nu^{(l)}, \atop    
\sum_{i=1}^k \ell^*(\mu^{(i)})+ \sum_{j=1}^l \ell^*(\nu^{(j)})  =d}\atop  |\mu^{(i)}|  =   |\nu^{(j)}|=N }
{\hskip -30 pt} \WW_{G_{{\bf c}, {\bf d}}}(\mu^{(1)}, \dots, \mu^{(k)}; \nu^{(1)}, \dots, \nu^{(l)})  \cr
&\&{\hskip 120 pt}  \times H(\mu^{(1)}, \dots, \mu^{(k)}, \nu^{(1)}, \dots, \nu^{(l)}, \mu),\cr
&\&
\label{H_d_G_rat_def}
\eea
where the weight factor is
\bea
&\& \WW_{G_{{\bf c}, {\bf d}}}(\mu^{(1)}, \dots, \mu^{(k)}; \nu^{(1)}, \dots, \nu^{(l)}) \cr
&\& :=
{(-1)^{\sum_{j=1}^l \ell^*(\nu^{(j)}) -l }\over k! l!}
 \sum_{\sigma \in S_k \atop \sigma' \in S_l}   \sum_{1 \leq a_1 < \cdots < a_{k } \leq M\atop 1 \leq b_1\cdots \leq b_k\leq L } 
  c_{a_{\sigma(1)}}^{\ell^*(\mu^{(1)})} \cdots c_{a_{\sigma(k)}}^{\ell^*(\mu^{(k)})}  
  d_{b_{\sigma'(1)}}^{\ell^*(\nu^{(1)})} \cdots d_{b_{\sigma'(l)}}^{\ell^*(\nu^{(l)})}.\cr
  &\&
 \label{WG_def}
\eea
\end{definition}

The sum 
\be
d := \sum_{i=1}^k \ell^*(\mu^{(i)}),   \text{ or } \  d := \sum_{i=1}^k \ell^*(\mu^{(i)}) +  \sum_{j=1}^l \ell^*(\nu^{(j)})
\ee
 of the colengths of the weighted partitions is related to the Euler characteristic $\chi$ or genus $g$
 of the covering curve by the Riemann-Hurwitz formula
\be
\chi = 2-2g = N +\ell(\mu) - d.
\label{riemann_hurwitz}
\ee


\subsection{Hypergeometric $\tau$-functions as generating functions for weighted Hurwitz numbers}

We recall the definition of  KP $\tau$-functions of hypergeometric type \cite{OrSc1, OrSc2}
that serve as generating functions for weighted Hurwitz numbers \cite{GH2, HO, H1}.
For a weight generating function $G(z)$ and nonzero parameter $\beta$,
we define two doubly infinite sequences $\{r_i^{(G, \beta)}, \rho_i\}_{i \in \Zb}$,
labeled by the integers
\bea
r^{(G, \beta)}_i &\&:= \beta G( i\beta), \quad i \in \Zb,  \quad \rho_0 =1,
\label{r_G_beta_i_def} \\
\rho_i &\& := \prod_{k=1}^i r^{(G, \beta)}_k,\quad  \rho_{-i}:=\prod_{k=0}^{i-1}( r^{(G, \beta)}_{-k})^{-1}, 
\quad i\in \Nb^+
\label{rho_i_def}
\eea
related by
\be
r_i^{(G, \beta)} = {\rho_i \over \rho_{i-1}},
\ee
where $\beta$ is viewed as a small parameter for which $G(i\beta)$ does not vanish for
any integer $i\in  \Zb$. 

\begin{remark}The particular cases of {\em simple}, {\em strongly monotonic,} {\em weakly monotonic} Hurwitz numbers
and {\em Belyi curves }correspond, respectively, to weight generating functions $G(z)=e^z$, ${1\over 1-z}$, $1+z$ and
$(1+ c_1 z)(1+c_2 z)$.
\end{remark}

For general rational weight generating functions (\ref{G_ratl_c_d}),
the parameters $\{\rho_i\}_{i\in \Zb}$ become
\bea
\rho_i({\bf c},{\bf d}) &\&:= \beta^i\prod_{k=1}^i {\prod_{l=1}^L (1+k \beta c_l)  \over \prod_{m=1}^M (1-k\beta d_m) }, \cr
 \rho_{-i}({\bf c},{\bf d})&\&:= \beta^{-i}\prod_{k=1}^{i-1} {\prod_{m=1}^M (1+k\beta d_m ) \over \prod_{l=1}^L(1- k \beta c_l ) },
\quad i\in \Nb. 
\label{rho_c_b_i_def}
\eea
For each partition $\lambda$ of $N$, we define the associated {\em content product} coefficient
\be
r^{(G, \beta)}_\lambda := \prod_{(i,j) \in \lambda} r^{(G, \beta)}_{j-i},
\ee
where the product is over the locations of all boxes in the Young diagram for $\lambda$.
The KP $\tau$-function of hypergeometric type associated to these parameters is 
 defined as the Schur function series \cite{GH2, HO}
\be
\tau^{(G, \beta)}({\bf t}) 
:=\sum_{N=0}^\infty \sum_{{\lambda}, \ |\lambda|=N}  
(h(\lambda))^{-1} r_\lambda^{(G, \beta)}  s_\lambda({\bf t}),
\label{tau_G_beta_schur_series}
\ee
where $h(\lambda)$ is the product of the hook lengths of the partition $\lambda$
 and ${\bf t}=(t_1, t_2 \dots)$ is the infinite sequence of KP flow parameters, which may be equated to the 
 sequence  of normalized power sums $(p_1, {1\over 2}p_2, \dots)$,
\be
t_i := {1\over i} \sum_{a} x_a^i = {1\over i} p_i
\ee
 in a finite or infinite set of auxiliary variables $(x_1, x_2, \dots )$. 
 
 Using the Schur character formula \cite{FH, Mac}
  \be
  s_\lambda = \sum_{\mu, \ |\mu|=|\lambda| = N} {\chi_\lambda(\mu) \over z_\mu} p_\mu,
  \label{schur_character_formula}
  \ee
  where $\chi_\lambda(\mu)$ is the irreducible character of the $S_N$ representation 
  determined by $\lambda$ evaluated on the conjugacy class $\cyc(\mu)$ consisting of elements with cycle lengths
  equal to the parts of $\mu$,
  \be
  z_{\mu} := \prod_{i=1}^{\ell(\mu)} m_i(\mu)! i^{m_i(\mu)}
  \label{mu_stabilizer_order}
  \ee 
  is the order of the stabilizer of the elements of this conjugacy class,  and
  \be
  p_\mu := \prod_{i=1}^{\ell(\mu)} p_{\mu_i}
  \label{power_sum_mu}
  \ee
  is the power sum symmetric function corresponding to partition $\mu$, we may re-express the Schur function
   series (\ref{tau_G_beta_schur_series}) as an expansion  in the basis $\{p_\mu\}$.
  \begin{theorem}[\cite{GH1, GH2, HO, H1}]
  The $\tau$-function $\tau^{(G, \beta)}({\bf t}) $ may equivalently be expressed as
 \be
\tau^{(G, \beta)}({\bf t}) 
= \sum_{d=0}^\infty  \beta^d \sum_\mu H^d_G (\mu) p_\mu({\bf t}),
 \label{tau_G_beta_power_sum_series}
 \ee
and the particular case of  rational weight generating functions $G_{{\bf c}, {\bf d}}(z)$, as
 \be
\tau^{(G_{{\bf c}, {\bf d}}, \beta)}({\bf t}) 
= \sum_{d=0}^\infty  \beta^d \sum_\mu  H^d_{G_{{\bf c}, {\bf d}}}(\mu) p_\mu({\bf t}).
 \label{tau_G_c_d_beta_power_sum_series}
 \ee
    \end{theorem}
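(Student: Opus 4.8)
The plan is to prove the theorem by a character-theoretic reorganization of the Schur series \eqref{tau_G_beta_schur_series}, converting it into the power-sum basis and matching coefficients with the weighted Hurwitz numbers. First I would substitute the hook-length identity $(h(\lambda))^{-1}=\dim V_\lambda/N!$ and the Schur character formula \eqref{schur_character_formula} into \eqref{tau_G_beta_schur_series}, obtaining
\[
\tau^{(G,\beta)}({\bf t})
= \sum_{N=0}^\infty\ \sum_{|\mu|=N}\frac{p_\mu({\bf t})}{z_\mu}
\left(\frac{1}{N!}\sum_{|\lambda|=N} r_\lambda^{(G,\beta)}\,(\dim V_\lambda)\,\chi_\lambda(\mu)\right).
\]
Comparing with \eqref{tau_G_beta_power_sum_series}, the theorem reduces to the single statement that, for each fixed $\mu$ with $|\mu|=N$, the bracketed coefficient equals $\sum_{d\ge 0}\beta^d H^d_G(\mu)$. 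It is convenient to rewrite the bracket using the central (normalized) character $\widehat{\chi}_\lambda(\mu):=|C_\mu|\chi_\lambda(\mu)/\dim V_\lambda$, with $|C_\mu|=N!/z_\mu$, so that the coefficient becomes a sum of the form $\frac{1}{(N!)^2}\sum_{|\lambda|=N}(\dim V_\lambda)^2\,\widehat{\chi}_\lambda(\mu)\,r_\lambda^{(G,\beta)}$.

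Next I would expand the right-hand side of \eqref{tau_G_beta_power_sum_series} independently. Applying the Frobenius formula to the pure Hurwitz number $H(\mu^{(1)},\dots,\mu^{(k)},\mu)$ appearing in \eqref{H_d_G_def} expresses it as $\frac{1}{(N!)^2}\sum_{|\lambda|=N}(\dim V_\lambda)^2\,\widehat{\chi}_\lambda(\mu)\prod_{i=1}^{k}\widehat{\chi}_\lambda(\mu^{(i)})$. Summing against the weight factor $\WW_G$ of \eqref{WG_def} and over all configurations of fixed total colength $d$, and then over $d$ with the grading $\beta^d$, collapses $\sum_d\beta^d H^d_G(\mu)$ into $\frac{1}{(N!)^2}\sum_{|\lambda|=N}(\dim V_\lambda)^2\,\widehat{\chi}_\lambda(\mu)\,W_\lambda(\beta)$, where $W_\lambda(\beta)$ is the $\beta$-generating function of the $\WW_G$-weighted products of central characters $\prod_i\widehat{\chi}_\lambda(\mu^{(i)})$. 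Thus both sides have the same character-sum shape, and the theorem is equivalent to the identity $W_\lambda(\beta)=r_\lambda^{(G,\beta)}$ for every $\lambda$, i.e.\ the weighted generating function of central characters coincides with the content-product coefficient.

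This last identity is the heart of the matter, and I would prove it with the Jucys--Murphy elements $J_r:=\sum_{a<r}(a\,r)\in\Cbb[S_N]$. Their symmetric functions are central, and they act on $V_\lambda$ with eigenvalues equal to the contents $\{j-i:(i,j)\in\lambda\}$; consequently $\prod_{r=1}^{N}G(\beta J_r)$ acts on $V_\lambda$ by the scalar $\prod_{(i,j)\in\lambda}G(\beta(j-i))=\beta^{-|\lambda|}r_\lambda^{(G,\beta)}$, using \eqref{r_G_beta_i_def} and the definition of $r_\lambda^{(G,\beta)}$. On the other hand, expanding $G(z)=\prod_i(1+c_i z)$ and reading $\prod_{r=1}^N G(\beta J_r)$ in the class-sum basis of $\Cbb[S_N]$ produces exactly the $\WW_G$-weighted combination of products of transpositions: because each $J_r$ only involves transpositions $(a\,r)$ with $a<r$ and the factors are read in increasing $r$, the resulting products are the ordered factorizations counted by the pure Hurwitz numbers, the parameters $c_i$ assemble into the monomial symmetric function $m_\lambda({\bf c})$ of \eqref{m_lambda}, and the transposition count is tracked by $\beta^d$ with $d$ the total colength. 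Equating the two evaluations of the same central element on $V_\lambda$ yields $W_\lambda(\beta)=r_\lambda^{(G,\beta)}$, completing the case \eqref{tau_G_beta_power_sum_series}.

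For the rational case \eqref{G_ratl_c_d} I would factor $G_{{\bf c},{\bf d}}(\beta J_r)=\prod_{l=1}^L(1+c_l\beta J_r)\prod_{m=1}^M(1-d_m\beta J_r)^{-1}$; the inverse factors are expanded as geometric series in the $J_r$, which reproduces the dual (complete/forgotten) symmetric-function contributions together with the signs $(-1)^{\sum_j\ell^*(\nu^{(j)})-l}$ appearing in the weight factor \eqref{WG_def}, giving \eqref{tau_G_c_d_beta_power_sum_series}. I expect the main obstacle to be the combinatorial bookkeeping in this identification: precisely matching the symmetrized group-algebra expansion of $\prod_r G(\beta J_r)$ with the definition of $\WW_G$, verifying that the grading by number of transposition factors coincides with the colength grading $d$ through the Riemann--Hurwitz relation \eqref{riemann_hurwitz}, and that the symmetry constants ($1/k!$, $z_\mu$, $|\aut(\lambda)|$) and the overall power of $\beta$ fixed by $|\mu|$ align on both sides; the Jucys--Murphy eigenvalue computation and the Frobenius formula are the clean algebraic core, while controlling these normalizations, and the convergence of the denominator expansions in the rational case, is the delicate part.
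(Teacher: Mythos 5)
Your proposal is correct and coincides with the proof in the references \cite{GH1, GH2, HO, H1} that the paper cites for this theorem (the paper itself gives no proof): Frobenius character calculus to reduce the claim to an identity between the content product $r_\lambda^{(G,\beta)}$ and a weighted character sum, then evaluation of the Jucys--Murphy central element $\prod_{r=1}^N G(\beta J_r)$ in two ways, with the geometric-series expansion of $(1-d_m\beta J_r)^{-1}$ handling the rational case exactly as in \cite{HO}. The one point to watch is that under this paper's convention $r_i^{(G,\beta)}=\beta\, G(i\beta)$ the Jucys--Murphy eigenvalue is $\beta^{-|\lambda|}r_\lambda^{(G,\beta)}$, so the power-sum expansion carries an extra factor $\beta^{|\mu|}$ (an artifact of the paper's identification of $\beta$ with the partition-weight parameter $\gamma$, acknowledged in its remark on the spectral curve) --- precisely the $\beta$-normalization bookkeeping you flagged as delicate, and not a gap in your argument.
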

It is thus a generating function for the weighted Hurwitz numbers $H^d_G (\mu) $. (See also \cite{Or} for $\tau$-functions
of hypergeometric type and, in particular, the case of rational weight generating functions.)

\subsection{The (dual) Baker function and adapted basis}

By Sato's formula \cite{Sa, SS, SW}, the dual Baker-Akhiezer function $\Psi^*_{(G, \beta)}(z, {\bf t})$ corresponding
to the KP $\tau$-function (\ref{tau_G_beta_schur_series}) is
\be
\Psi^*_{(G, \beta)}(z, {\bf t}) = e^{-\sum_{i=1}^\infty t_iz^i} 
{\tau^{(G, \beta)}({\bf t} +[z^{-1}]) \over \tau^{(G, \beta)}({\bf t}) },
\ee
where
\be
[z^{-1}] := \left({1\over z}, {1\over 2 z^2}, \dots, {1\over n z^n}, \dots\right).
\ee
Evaluating at ${\bf t } ={\bf 0}$ and setting
\be
x := {1\over z},
\ee
we define
\be
\phi_1(x) := \Psi^*_{G, \beta}(1/z, {\bf 0}) = \tau^{(G, \beta)}([x]). 
\label{phi1_tau}
\ee

More generally \cite{ACEH1, ACEH2, ACEH3}, we introduce a doubly infinite sequence of  functions $\{\phi_k(x)\}_{k\in \Zb}$,
defined as  contour integrals around a small circle centred at  the origin (or formal residues)
\be
\phi_k(x) = {\beta\over 2\pi i  x^{k-1}} \oint_{|\zeta|=\epsilon} \rho_k(\zeta) e^{\beta^{-1} x\zeta } {d\zeta\over \zeta^k},
\label{phi_k_fourier_rep}
\ee
where $\rho_k(\zeta)$ is the semi-infinite Laurent  series in ${1\over \zeta}$ centred at the origin
\be
\rho_k(\zeta) := \sum_{i=-\infty}^{k-1} \rho_{-i-1} \zeta^i,
\label{rho_fourier}
\ee
with the $\rho_i$'s given by eqs.~(\ref{r_G_beta_i_def}), (\ref{rho_i_def}).
Then $\{\phi_k(1/z)\}_{k\in \Nb^+}$ forms a basis for the element $w^{(G,\beta)}$ of the infinite 
Sato-Segal-Wilson Grassmannian corresponding to the $\tau$-function $\tau^{(G,\beta)}({\bf t})$.

 The $\phi_k$'s may alternatively be expanded as formal (or convergent)  Laurent series by evaluating the integrals 
 as a sum of residues at the origin, 
\be
\phi_{k}(x)  = \beta x^{1-k}\sum_{j= 0}^{\infty} 
{ \rho_{j-k}\over j!} \left({x \over\beta}\right)^j.
\label{phi_k_series}
\ee
This has a nonvanishing radius of convergence in $x$ provided $\rho_j$ grows no faster than a power 
 \be
 |\rho_j | < K^j, \quad j \ra \infty.
 \ee

Following \cite{ACEH1, ACEH2, ACEH3}, we introduce the recursion operator
\be
R := \beta x G(\beta\DD),
\ee
where $\DD$ is the Euler operator
\be
\DD := x {\d \over dx},
\ee
and verify that the $\phi_k$'s satisfy the recursion relations
\be
R(\phi_k)= \phi_{k-1},  \quad  k \in \Zb.
\label{phi_k_rec}
\ee

\subsection{The quantum and classical spectral curve}

It is easily seen from the series expansions (\ref{phi_k_series}) that the  functions $\{\phi_k(x)\}_{k\in \Zb}$
satisfy the sequence of equations
\be
\beta (\DD +k -1) \phi_k= \phi_{k-1}, \quad k \in \Zb,
\label{D_phi_k}
\ee
which, combined with (\ref{phi_k_rec}), are equivalent to:
\be
\LL_{{\bf c} ,{\bf d}, \beta} \phi_k := (x G(\beta\DD) - \DD)\phi_k  = (k-1)\phi_k \quad k \in \Zb.
\label{phi_k_eq}
\ee
The case $k=1$
\be
\LL \phi_1(x)  =0
\label{quant_spec_curve}
\ee
is known in the {\em topological recursion} approach \cite{Eyn:book, EO-review, EO2, ACEH1, ACEH2, ACEH3}, 
as the {\em quantum spectral curve} equation.  Substituting  
\be
\beta {d\over dx} \ra y,
\ee
in (\ref{quant_spec_curve}) gives the classical spectral curve: 
\be
y = \beta G(xy).
\ee
(The residual factor $\beta$ is just an artifact, due to the fact that, for simplicity, we have
equated $\beta$ with another generating function parameter $\gamma$, which keeps track of the
weights of the partitions appearing in the expansion (\ref{tau_G_beta_schur_series});
 it can here simply be set equal to $1$.)
 
  When the weight generating function is a polynomial $G=G_{\bf c, {\bf 0}}$, 
eq.~(\ref{quant_spec_curve}) is an ordinary differential equation of degree $L$ with coefficients 
that are polynomials of degree $\leq L$.
More generally, if $G$ is a rational function  \hbox{$G= G_{{\bf c}, {\bf d}}$}, as in (\ref{G_ratl_c_d}), we denote the  $\phi_k$'s as
\be
\phi^{({\bf c},{\bf d},\beta)} _k(x) = \beta x^{1-k}\sum_{j= 0}^{\infty} 
{ \rho_{j-k} ({\bf c}, {\bf d} )\over j!} \left({x \over\beta}\right)^j.
\label{phi_k_series_ratl}
\ee
In particular, for $k=1$, this is the hypergeometric series
\bea
\phi^{({\bf c},{\bf d},\beta)}_1(x) &\&= 
 \sum_{j=0}^\infty {\prod_{l=1}^L ( {1\over \beta c_l})_j\over
\prod_{m=1}^M ( -{1\over \beta d_m})_j }{(\kappa_{{\bf c}, {\bf d}} x)^j\over j!} \cr
&\& =\leftidx{_L}{F_M}\left( {{1 \over \beta c_1}, \cdots, {1\over \beta c_L} \atop
-{1 \over \beta d_1}, \cdots,- {1 \over \beta d_M}}  \bigg{|}  \kappa_{{\bf c}, {\bf d}} x \right),
\label{phi_1_hypergeom}
\eea
where
\be
\kappa_{{\bf c}, {\bf d}}:=(-1)^{M} {\prod_{l=1}^L \beta c_l\over \prod_{m=1}^M \beta d_m}.
\ee
When $L\le M$, this converges for all $x$;  when $L=M+1$ it converges within a finite radius of $x=0$.
But if $L > M+1$, it is a divergent formal series.
In general, for any $k \in \Zb$,  $\phi_k$ is  proportional to $x^{1-k}$ times the hypergeometric series 
with parameters shifted by $k-1$,
\bea
\phi^{({\bf c},{\bf d},\beta)}_k(x) &\&= {\beta \rho_{-k}({\bf c}, {\bf d})\over (\kappa_{{\bf c}, {\bf d}}x)^{k-1}}
 \sum_{j=0}^\infty {\prod_{l=1}^L (1-k+ {1\over \beta c_l})_j\over
\prod_{m=1}^M (1-k -{1\over \beta d_m})_j }{(\kappa_{{\bf c}, {\bf d}} x)^j\over j!} \cr
&\& = {\beta \rho_{-k}({\bf c}, {\bf d})\over (\kappa_{{\bf c}, {\bf d}}x)^{k-1}}\leftidx{_L}{F_M}\left( {1-k+{1 \over \beta c_1}, \cdots, 1-k+{1\over \beta c_L} \atop
1-k-{1 \over \beta d_1}, \cdots,1-k- {1 \over \beta d_M}}  \bigg{|}  \kappa_{{\bf c}, {\bf d}} x \right),  \cr
&\&
\label{phi_k_hypergeom}
\eea
and hence is  also convergent, or not, depending on whether   $L\le M$, $L=M+1$ or $L > M+1$.

Defining an auxiliary function $\Phi^{({\bf c},{\bf d},\beta)}(x)$ such that
\be
\phi^{({\bf c},{\bf d},\beta)}_1 = \prod_{j=1}^M (1 - \beta d_j \DD) \Phi^{({\bf c},{\bf d},\beta)},
\label{Phi_def_ratl}
\ee
the quantum spectral curve equation may be expressed as
\be
x \prod_{i=1}^L (1 + \beta c_i \DD) \Phi^{({\bf c},{\bf d},\beta)}  - \DD \prod_{j=1}^M (1 - \beta d_j \DD) \Phi^{({\bf c},{\bf d},\beta)}  =0.
\label{quant_spec_curve_ratl}
\ee

\begin{remark}
Note that, although in the space of differentiable functions in a punctured neighbourhood of $x=0$, with  possible
branch point at $x=0$, the operator $\prod_{j=1}^L (1 - \beta d_j \DD)$ has a kernel of dimension $L$,  
if we restrict ourselves to formal (or convergent) 
Laurent series in $x$ and assume that  the $\beta d_j$'s are not integers, the kernel vanishes, 
making $\Phi^{({\bf c},{\bf d},\beta)} (x)$ unique.
 \end{remark}
 
\section{Solutions of the quantum spectral curve equation for rational $G(z)=G_{{\bf c}, {\bf d}}(z)$}
\label{quantum_curve_solutions}

Equation \eqref{quant_spec_curve_ratl} is of the type satisfied by Meijer $G$-functions \cite{BE, NIST}
\be
\left[ (-1)^{p - m - n} \;\z \prod_{j = 1}^p \left(\z\frac {\d}{\d \z} +1- a_j \right) - \prod_{j = 1}^q \left(\z\frac {\d}{\d \z} - b_j \right) \right] G^{m,n}(\z) = 0,
\label{MeijerGeq}
\ee
with the identifications 
\bea
\label{subs}
p&\&=L,\  q = M+1,\  \zeta=  (-1)^{L - m - n} \kappa_{{\bf c}, {\bf d}} x, \cr
a_j &\&= 1- \frac 1{\beta c_j},\quad b_1 = 0,\ \ b_{j+1} = \frac 1 {\beta d_j} \ \ (j\geq 1).
\eea
These all have Mellin-Barnes type integral representations of the form
\be
G_{p,q}^{m,n}\left( {a_1, \cdots, a_p \atop b_1, \cdots, b_q}  \bigg{|} \zeta \right)
 ={1\over 2 \pi i} \int_{\CC_{{\bf a} {\bf b}}}
 {\prod_{j=1}^m \Gamma(b_j-s) \prod_{k=1}^n \Gamma(1-a_k + s) \zeta^s  \over 
\prod_{j=m+1}^q \Gamma(1-b_j + s) \prod_{k=n+1}^p \Gamma(a_k- s) }ds, 
 \ee
 where $0\leq m\leq q, \ \ 0\leq n \leq p$ and the contour $\CC_{{\bf a} {\bf b}}$ lies between the poles at $\{s=b_j + i\}_{j=1, \dots, m, \, i\in \Nb^+}$
 and $\{s= a_k - i -1\}_{k=1, \dots, n, \, i\in \Nb^+}$.

By retracting the contour to the right-hand $s$--plane and evaluating the residues at the poles of the integrand, we obtain
a representation of the  Meijer-G function as a sum of powers of $\zeta$ within the set $\bigcup (b_j + \Nb)$. 
If $p\geq q$, this only yields a formal series, which may be interpreted as an asymptotic series
in suitable sectors of the complex $\zeta$-plane. However,  if  $p<q$, it gives  a convergent expansion. 

For  $k \in \Zb$, we now define the sequence of functions
\bea
\tilde{\phi}_k^{({\bf c},{\bf d},\beta)} (x)&\& :=  C_k^{({\bf c},{\bf d},\beta)} G_{L,M+1}^{1,L}\left( {1-\frac 1{\beta c_1}, \cdots, 1-\frac 1{\beta c_L} \atop 1-k,
1+\frac 1 {\beta d_1}, \cdots, 1+\frac 1{\beta d_M}}  \bigg{|}  -\kappa_{{\bf c}, {\bf d}} x \right)\nonumber \\
&\&={C_k^{({\bf c},{\bf d},\beta)}\over 2\pi i } \int_{\CC_k}  \frac {\Gamma(1-k-s)  \prod_{\ell=1}^L \Gamma \left( s + \frac 1 {\beta c_\ell} \right) 
\left(-\kappa_{{\bf c}, {\bf d}} x\right)^s}{\prod_{m=1}^M \Gamma\left( s - \frac 1{\beta d_m}  \right)}  ds. \cr
&\&
\label{phi_tilde_k_integ}
\eea
where
\be
C_k^{({\bf c},{\bf d},\beta)} :={\prod_{j=1}^M \Gamma(-{1\over \beta d_j}) \over (-\beta)^{k-1}\prod_{\ell=1}^L \Gamma({1\over \beta c_\ell})},
\label{C_dc_beta_norm}
\ee
with the contour $\CC_k$ chosen so that all the poles of the integrand at the integers $1-k, 2-k, \cdots$ are to its right
and the poles at $\{-i- {1\over \beta c_j}\}_{ j=1, \cdots L, \, i\in \Nb^+}$ are to its left. 

If $L\le M$ this gives a convergent value 
for all $ x\neq 0$ when the contour $\CC_k$ starts at $+\infty$ below the real axis, proceeds in a clockwise sense,
passes  around all the poles on the  real axis and ends at $+\infty$. If $L=M+1$ it also converges along this contour,
within the punctured disc 
\be
0 <|\kappa_{{\bf c}, {\bf d}} x| <1.
\ee
 If  $L\ge M+1$ the contour $\CC_k$
may  be chosen to start and end on the imaginary axis at $-i \infty$ and $+i\infty$.
The integral then converges provided that  
\be
|\arg(-\kappa_{{\bf c}, {\bf d}} x)| < \left({L-M+1\over 2}\right) \pi.
\label{sector}
\ee

We  then have the following theorem.
\begin{theorem}
\label{phi_k_G_function}
The  functions $\tilde{\phi}^{({\bf c},{\bf d},\beta)}_k(x)$ satisfy the sequence of equations (\ref{D_phi_k}),
the recursion relations (\ref{phi_k_rec}) and the eigenfunction equations (\ref{phi_k_eq}).

If $L \le M$,   the Laurent series (\ref{phi_k_series_ratl})  for $\phi_k^{({\bf c},{\bf d},\beta)} (x)$ is 
absolutely convergent for all $x\in \Cb$, $x \neq 0$, and
\be
\tilde{\phi}_k^{({\bf c},{\bf d},\beta)} (x) = \phi_k^{({\bf c},{\bf d},\beta)} (x).
\label{phi_k_tilde_phi_k}
\ee
 If $L=M+1$, it converges to $\tilde{\phi}_k^{({\bf c},{\bf d},\beta)} (x)$
  within the punctured disc
 \be
 0 < |\kappa_{{\bf c}, {\bf d}} x| <1.
 \ee
If $L > M+1$, eq.~(\ref{phi_k_series_ratl})  is the asymptotic series for $\tilde{\phi}_k^{({\bf c},{\bf d},\beta)} (x) $ 
within the angular sector 
\be
0<\arg(- \kappa_{{\bf c}, {\bf d}} x) < (L-M+1) {\pi \over 2}.
 \ee
\end{theorem}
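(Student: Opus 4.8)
The plan is to work entirely with the Mellin--Barnes representation (\ref{phi_tilde_k_integ}), writing $u := -\kappa_{{\bf c},{\bf d}}\,x$ and
\[
\tilde\phi_k^{({\bf c},{\bf d},\beta)}(x) = \frac{C_k^{({\bf c},{\bf d},\beta)}}{2\pi i}\int_{\CC_k}\mathcal{I}_k(s)\,u^s\,ds,
\qquad
\mathcal{I}_k(s) = \frac{\Gamma(1-k-s)\prod_{\ell=1}^L\Gamma\!\left(s+\tfrac{1}{\beta c_\ell}\right)}{\prod_{m=1}^M\Gamma\!\left(s-\tfrac{1}{\beta d_m}\right)},
\]
and to translate each operator into an operation on the integrand. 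Under $\int(\cdot)\,u^s\,ds$, the Euler operator $\DD=x\,d/dx$ acts as multiplication by $s$, multiplication by $x$ acts as the scalar $-1/\kappa_{{\bf c},{\bf d}}$ combined with the reindexing $s\mapsto s-1$, and $G_{{\bf c},{\bf d}}(\beta\DD)$ acts as multiplication by the rational function $G_{{\bf c},{\bf d}}(\beta s)=\prod_\ell(1+\beta c_\ell s)/\prod_m(1-\beta d_m s)$. Because these are genuine multiplication and shift operations on the integrand, the noncommutativity of $x$ and $\DD$ is handled automatically by the order in which the reindexing is applied, so no operator-ordering subtleties remain.

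For (\ref{D_phi_k}) I would multiply $\mathcal{I}_k(s)$ by $\beta(s+k-1)$ and use $(s+k-1)\Gamma(1-k-s)=-\Gamma(2-k-s)=-\Gamma(1-(k-1)-s)$; the factor $(s+k-1)$ cancels the pole of $\Gamma(1-k-s)$ at $s=1-k$, so the same contour $\CC_k$ already separates the poles required for $\tilde\phi_{k-1}$, and the normalization (\ref{C_dc_beta_norm}), which gives $C_{k-1}=-\beta\,C_k$, yields $\beta(\DD+k-1)\tilde\phi_k=\tilde\phi_{k-1}$. For the recursion (\ref{phi_k_rec}) I would first multiply $\mathcal{I}_k$ by $G_{{\bf c},{\bf d}}(\beta s)$ and absorb the rational factor into the Gamma quotient using $(1+\beta c_\ell s)\Gamma(s+\tfrac1{\beta c_\ell})=\beta c_\ell\,\Gamma(s+1+\tfrac1{\beta c_\ell})$ and $(s-\tfrac1{\beta d_m})\Gamma(s-\tfrac1{\beta d_m})=\Gamma(s+1-\tfrac1{\beta d_m})$, checking that the scalar prefactor collapses exactly to $\kappa_{{\bf c},{\bf d}}$; then multiplying by $\beta x$ and reindexing $s\mapsto s-1$ returns $-\beta$ times the integrand of $\tilde\phi_{k-1}$, and $C_{k-1}=-\beta\,C_k$ again produces $R\tilde\phi_k=\tilde\phi_{k-1}$. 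The eigenvalue equation (\ref{phi_k_eq}) is then purely algebraic: equating $R\tilde\phi_k=\beta x\,G_{{\bf c},{\bf d}}(\beta\DD)\tilde\phi_k$ with $\beta(\DD+k-1)\tilde\phi_k$ gives $(x\,G_{{\bf c},{\bf d}}(\beta\DD)-\DD)\tilde\phi_k=(k-1)\tilde\phi_k$, exactly as (\ref{phi_k_eq}) is derived from (\ref{phi_k_rec}) and (\ref{D_phi_k}) in the text.

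For the identification with the Laurent series I would evaluate the integral by residues at the poles of $\Gamma(1-k-s)$, located at $s=1-k+j$, $j=0,1,2,\dots$, with $\Res_{s=1-k+j}\Gamma(1-k-s)=(-1)^{j+1}/j!$. Collecting these residues produces a power series in $u$ whose $j$-th coefficient is $\prod_\ell\Gamma(1-k+\tfrac1{\beta c_\ell}+j)/\big(j!\prod_m\Gamma(1-k-\tfrac1{\beta d_m}+j)\big)$ up to the normalization $C_k$; rewriting these Gamma ratios as Pochhammer symbols reproduces precisely the hypergeometric form (\ref{phi_k_hypergeom}) and hence the series (\ref{phi_k_series_ratl}), the constant (\ref{C_dc_beta_norm}) having been chosen so that the $j=0$ term matches $\beta\rho_{-k}({\bf c},{\bf d})\,x^{1-k}$. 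Whether this residue sum equals $\tilde\phi_k$ or is merely asymptotic to it is governed by whether the contour may be closed to the right: a Stirling estimate of $\mathcal{I}_k(s)$ as $\Re s\to+\infty$ shows the arcs vanish and the series converges to $\tilde\phi_k$ when $L\le M$ (all $x\neq0$) and when $L=M+1$ (for $0<|\kappa_{{\bf c},{\bf d}}x|<1$), which is exactly the classical convergence trichotomy for ${}_LF_M$. When $L>M+1$ the vertical contour cannot be closed; instead I would shift it rightward past the first $n$ poles, bound the remaining line integral by $O(|u|^{\,\Re s})$, and thereby identify (\ref{phi_k_series_ratl}) as the Poincar\'e asymptotic expansion of $\tilde\phi_k$ valid in the sector (\ref{sector}), the sector width $(L-M+1)\pi/2$ emerging from the phase of the Stirling asymptotics of $\mathcal{I}_k$.

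I expect the main obstacle to be the analytic bookkeeping of the contours rather than the algebra: justifying that the reindexing $s\mapsto s-1$ used for the recursion carries $\CC_k$ to an admissible contour for $\tilde\phi_{k-1}$ without sweeping across a pole (so that no spurious residue is generated), and, more seriously, the uniform Stirling estimates on the large semicircular arcs and on the shifted vertical lines that underlie both the convergence trichotomy and the asymptotic remainder bound. These estimates, together with the precise placement of $\CC_k$ relative to the two families of poles, are where the distinction between $L\le M$, $L=M+1$ and $L>M+1$ genuinely enters, and care is needed to keep $\arg(-\kappa_{{\bf c},{\bf d}}x)$ within the stated sector so that the factor $u^s$ does not spoil the decay.
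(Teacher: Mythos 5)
Your proposal is correct and follows essentially the same route as the paper's proof: the same Mellin--Barnes manipulations under the integral sign (with $\DD\mapsto s$, the Gamma-shift identities, and the contour chosen so the reindexing $s\mapsto s\pm1$ crosses no poles), the same residue evaluation at the poles of $\Gamma(1-k-s)$ to recover the hypergeometric series when $L\le M+1$, and the same contour retraction to $\Re s = N+\tfrac12$ with Stirling estimates to establish the Poincar\'e asymptotics in the sector of width $(L-M+1)\pi/2$ when $L>M+1$. The only cosmetic difference is logical ordering --- you prove (\ref{D_phi_k}) and (\ref{phi_k_rec}) directly and deduce (\ref{phi_k_eq}), while the paper proves (\ref{D_phi_k}) and (\ref{phi_k_eq}) and deduces (\ref{phi_k_rec}) --- but the underlying Gamma-function algebra is identical.
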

\begin{proof}
\noindent {\bf Case  $L \le M +1$.}
We first prove the result  for  $k=1$ by evaluation of the residues of the integrand
at the poles $s = 0, 1, \dots$. These all come from the factor $\Gamma(-s)$ in the numerator of the integrand,
whose residues are
\be
\res_{s=j} \Gamma(-s)  = {(-1)^j\over j!}.
\ee
Substituting this, and evaluating all other factors in the integrand at the poles gives, by the Cauchy residue theorem,
\bea
\tilde{\phi}^{({\bf c},{\bf d},\beta)} _1(x) &\&= \sum_{j=0}^\infty  \prod_{l=1}^L {\Gamma(j+ {1\over \beta c_l} ) \over \Gamma({1\over \beta c_\ell})}
\prod_{m=1}^M {\Gamma(-{1\over \beta d_m}) \over \Gamma(j- {1\over \beta d_m} )} {(\kappa_{{\bf c}, {\bf d}} x)^j\over j!}\cr
&\& = \sum_{j=0}^\infty {\prod_{l=1}^L \left({1\over \beta c_l}\right)_j \over \prod_{m=1}^M \left(- {1\over \beta d_m}\right)_j}
{(\kappa_{{\bf c}, {\bf d}} x)^j \over j!} =  \leftidx{_L}{F_M}
\left( {{1 \over \beta c_1}, \cdots, {1\over \beta c_L} \atop
-{1 \over \beta d_1}, \cdots,- {1 \over \beta d_M}}  \bigg{|}  \kappa_{{\bf c}, {\bf d}} x \right). \cr
&\&
\label{sum_Gamma_residues_0}
\eea
The proof for general $k\in \Zb$ is the same, {\em mutatis mutandis}, with the contour $\CC_1$ replaced by $\CC_k$,  the 
normalization constant $C_1^{({\bf c},{\bf d},\beta)}$ by $C_k^{({\bf c},{\bf d},\beta)}$ and the term $\Gamma(-s)$
in the numerator of the integrand by $\Gamma(1-k-s)$. (Also see Remark \ref{slater_L_smaller} below.)

\noindent {\bf Case  $L >M +1$.}  
In this case $p=L>  q=M+1$, the  ODE \eqref{MeijerGeq} has an irregular singularity at $\z=0$ and the solutions manifest Stokes' phenomenon.  The function \eqref{phi_tilde_k_integ} is one such solution. (See Remark \ref{slater_L_larger} below.)
We now  show that the series \eqref{phi_k_series_ratl} is indeed its asymptotic series as $x\to 0$ in the sector \eqref{sector}.  

First note again that the integrand in \eqref{phi_tilde_k_integ} has poles at $s = -k+1, -k+2,\dots$ with residues that  give 
precisely the corresponding coefficient of $x^s$ in \eqref{phi_k_series_ratl}. What must be shown is that the remainder 
of $\tilde{\phi}_k^{({\bf c},{\bf d},\beta)} (x)$ minus a truncation of the series \eqref{phi_k_series_ratl} to order $\mathcal O(x^N)$ is of order $o(x^N)$ as $x\ra 0 $ in the indicated sector, since this is the definition of Poincar\'e\ asymptotics. 

To show this, consider  the integral from \eqref{phi_tilde_k_integ}:
\bea
\label{phikj}
F(\z) := \int_{\CC_k}  \frac {\Gamma(1-k-s)  \prod_{\ell=1}^L \Gamma \left( s + \frac 1 {\beta c_\ell} \right) 
\z^s}{\prod_{m=1}^M \Gamma\left( s - \frac 1{\beta d_m}  \right)}  ds ,
\eea
where we have set $\z=-\kappa_{{\bf c}, {\bf d}} x$ for brevity.
We use the Stirling approximation in the following form 
\bea
\label{estimate}
&\& \ln \Gamma(s) = \left(s-\frac 1 2\right )\ln \left(s-\frac 1 2\right) - \left(s-\frac 1 2\right) + \mathcal O(1)\\
&\& \ln\Gamma(1-s) = -  \left(s-\frac 1 2\right )\ln \left(s-\frac 1 2\right) + \left(s-\frac 1 2\right) -\ln \sin(\pi s)  + \mathcal O(1).
\eea
Denoting the integrand of  \eqref{phikj} $A_k(s) \z^{s}$, with $\z =- \kappa_{\bf c ,d } x$,  we can estimate its modulus as follows: 
\be
\label{lnA}
\ln (A_k(s) \z^{s}) =(L-M-1)  \big(s \ln s - s\big) - \ln \sin(\pi s) +  s\ln\z +   o(|s|).
\ee
We choose the contour of integration $\mathcal C_k$ in \eqref{phikj} to tend to infinity in
both positive and negative vertical directions and approach $\Re s=N + \frac 1 2 $, $N\in \mathbb R$ in both.
Concretely, we  may take a contour that coincides with the vertical line $\Re s=N+\frac 1 2 $ for $|\Im s|$  sufficiently large,
but again passes to the left of the poles at $1-k, 2-k, \cdots$ and to the right of those at $\{-j - {1\over \beta c_l}\}_{l=1, \cdots ,L, \, j\in \Nb}$. 
Along this contour the real part of \eqref{lnA} has the following asymptotic behaviour
\bea
\ln| A(s) \z^s| &\& = -(L-M-1)|\Im s| \frac \pi 2 - \pi |\Im s| +| \Im s| \arg \z +o(|s|) \cr
&\& = \frac \pi 2 \left(M - 1 -L\right) |\Im s| +| \Im s| \arg \z + o(|s|).
\eea
Thus,  since the modulus of the integrand tends to zero faster than any inverse power of $|s|$,  
the integrand is convergent  as long as $|\arg \z| < \frac {\pi(L-M+1)}2$.

Using the Cauchy residue theorem we can retract (see Fig. \ref{Ck}) the contour $\mathcal C_k$ to $ \Re s = N+\frac 1 2$. In doing so we pick up finitely many residues at $s = -k+1,-k+2,\dots, N-1,N$, which correspond to the truncation of the series \eqref{phi_k_series_ratl} 
up to the power $x^N$ included. It only remains to estimate the remaining integration $R(\z)$ (remainder term):
\be
R(\z) =   \int_{i\mathbb R+N+\frac 1 2} A_k(s) \z^s d s
\ee
The estimate  \eqref{estimate} shows that  the integral is well defined as long as $|\arg \z| < \frac {\pi(L-M+1)}2$.
Furthermore the whole integral is easily estimated to be of the order $|\z|^{N+\frac 1 2}$ because
\bea
\left|\int_{i\mathbb R+N+\frac 1 2} A_k(s) \z^s d s\right|\leq |z|^{N+ \frac 1 2} \int_{i\mathbb R+N+\frac 1 2} |A(s)| {\rm e}^{\arg \z \Im s} \d s
\eea
and the last integral can be uniformly estimated in any sector $|\arg \z| < \frac {\pi(L-M+1)}2 - \epsilon$.   This shows that $\phi_k$ has the  formal series \eqref{phi_k_series} as asymptotic series in the sense of Poincar\'e.  

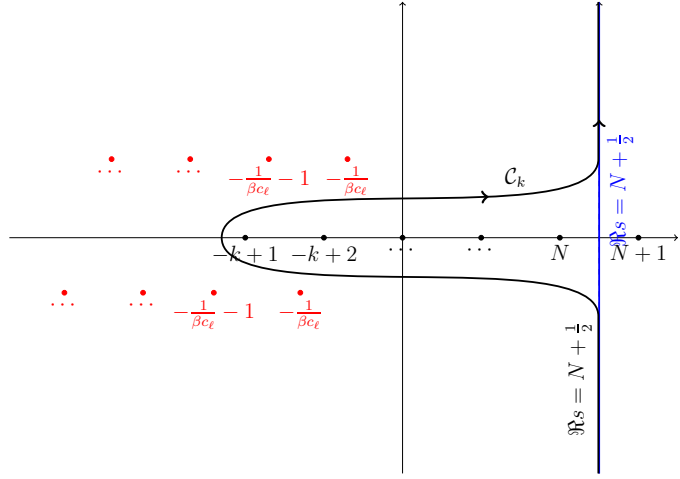
\begin{figure}[H]
\begin{center}
\resizebox{0.6\textwidth}{!}{
\begin{tikzpicture}[scale=01.5]

\draw[fill] (0,0) node[below]{$-k+1$} circle [radius=0.03];
\draw[fill] (1,0) node[below]{$-k+2$} circle [radius=0.03];
\draw[fill] (2,0) node[below]{$\cdots$} circle [radius=0.03];
\draw[fill] (3,0) node[below]{$\cdots $} circle [radius=0.03];
\draw[fill] (4,0) node[below]{$N$} circle [radius=0.03];
\draw[fill] (5,0) node[below]{$N+1$} circle [radius=0.03];

\draw[fill,red] (1.3,1) node[below]{$-\frac 1{\beta c_\ell}$} circle [radius=0.03];
\draw[fill,red] (0.3,1) node[below]{$-\frac 1{\beta c_\ell}-1$} circle [radius=0.03];
\draw[fill,red] (-0.7,1) node[below]{$\cdots$} circle [radius=0.03];
\draw[fill,red] (-1.7,1) node[below]{$\cdots $} circle [radius=0.03];

\draw[fill,red] (0.7,-0.7) node[below]{$-\frac 1{\beta c_\ell}$} circle [radius=0.03];
\draw[fill,red] (-0.4,-0.7) node[below]{$-\frac 1{\beta c_\ell}-1$} circle [radius=0.03];
\draw[fill,red] (-1.3,-0.7) node[below]{$\cdots$} circle [radius=0.03];
\draw[fill,red] (-2.3,-0.7) node[below]{$\cdots $} circle [radius=0.03];

\draw[->] (2,-3)--(2,3);
\draw[->] (-3,0)--(5.5,0);
\draw[dashed, -> ] (4.5,-3)--node[pos =0.2,sloped, above] {$\Re s= N+\frac 1 2$} (4.5,3);

%

\draw[ blue,  line width=1,postaction={decorate,decoration={{markings,mark=at position 0.75 with {\arrow[black,line width=1.5pt]{>}}}} }]
(4.5,-3)--node[pos=0.6, below, sloped]{$\Re s = N+\frac 1 2 $}(4.5, 3);

\draw[black , line width =1,postaction={decorate,decoration={{markings,mark=at position 0.75 with {\arrow[black,line width=1.5pt]{>}}}} }]
(4.49,-3) to (4.49, -1) to[out=90, in = -90, looseness=0.5] (-0.3,0)
to [out=90, in =-90, looseness=0.5] node [pos=0.7, sloped, above]{ $\mathcal C_k$} (4.49, 1) to (4.49,3);
;

%
\end{tikzpicture}}
\end{center}
\caption{The contours of integration for the function \eqref{phikj} in the case $L>M+1$.}
\label{Ck}
\end{figure}

Next, regardless of the relative valus of $L$ and $M$, applying the operator $\DD +k -1$ to 
$\tilde{\phi}^{({\bf c},{\bf d},\beta)}_k(x)$  and evaluating
under the integral sign  it follows immediately that the sequence of equations (\ref{D_phi_k}) is satisfied. 

Finally, the following computation shows that the sequence of eigenfunction equations (\ref{phi_k_eq})
 is satisfied by the $\tilde{\phi}^{({\bf c},{\bf d},\beta)}_k(x)$ 's, and hence, so are  the recursion relations (\ref{phi_k_rec}).
 \bea
&\& ( \DD + k -1) \int_{\CC_k}  \frac {\Gamma(1-k-s)  \prod_{\ell=1}^L \Gamma \left( s + \frac 1 {\beta c_\ell} \right) 
\left(-\kappa_{{\bf c}, {\bf d}} x\right)^s}{\prod_{m=1}^M \Gamma\left( s - \frac 1{\beta d_m}  \right)}  ds. \cr
 &\& = \int_{\CC_k}  \frac { (s + k -1)\Gamma(1-k-s)  \prod_{\ell=1}^L \Gamma \left( s + \frac 1 {\beta c_\ell} \right) 
\left(-\kappa_{{\bf c}, {\bf d}} x\right)^s}{\prod_{m=1}^M \Gamma\left( s - \frac 1{\beta d_m}  \right)}  ds \cr
&\&=- \int_{\CC_k}  \frac {\Gamma(2-k-s)  \prod_{\ell=1}^L \Gamma \left( s + \frac 1 {\beta c_\ell} \right) 
\left(-\kappa_{{\bf c}, {\bf d}} x\right)^s}{\prod_{m=1}^M \Gamma\left( s - \frac 1{\beta d_m}  \right)}  ds 
\cr
&\&=- \int_{\CC_k}  \frac {\Gamma(1-k-s)  \prod_{\ell=1}^L \Gamma \left(1+ s + \frac 1 {\beta c_\ell} \right) 
\left(-\kappa_{{\bf c}, {\bf d}} x\right)^{s+1}}{\prod_{m=1}^M \Gamma\left( 1+ s - \frac 1{\beta d_m}  \right)}  ds \cr
&\&= 
\kappa_{{\bf c}, {\bf d}} x \int_{\CC_k}  \frac {\Gamma(1-k-s)  \prod_{\ell=1}^L \Gamma \left(1+ s + \frac 1 {\beta c_\ell} \right) 
\left(-\kappa_{{\bf c}, {\bf d}} x\right)^s}{\prod_{m=1}^M \Gamma\left( 1+ s - \frac 1{\beta d_m}  \right)}  ds \cr
&\&= 
\kappa_{{\bf c}, {\bf d}} x \int_{\CC_k}  \frac {\Gamma(1-k-s)  \prod_{\ell=1}^L \left( s + \frac 1 {\beta c_\ell} \right)\Gamma \left( s + \frac 1 {\beta c_\ell} \right) 
\left(-\kappa_{{\bf c}, {\bf d}} x\right)^s}{\prod_{m=1}^M \left( s - \frac 1{\beta d_m} \right)\Gamma\left( s - \frac 1{\beta d_m}  \right)}  ds \cr
&\&= 
x \int_{\CC_k}  \frac {\Gamma(1-k-s)  \prod_{\ell=1}^L \left( 1+  \beta  s c_\ell  \right)\Gamma \left( s + \frac 1 {\beta c_\ell} \right) 
\left(-\kappa_{{\bf c}, {\bf d}} x\right)^s}{\prod_{m=1}^M \left( 1 -\beta s d_m \right)\Gamma\left( s - \frac 1{\beta d_m}  \right)}  ds \cr
&\& = x G_{{\bf c}, {\bf d}}(\beta \DD)  \int_{\CC_k}  \frac {\Gamma(1-k-s)  \prod_{\ell=1}^L \Gamma \left( s + \frac 1 {\beta c_\ell} \right) 
\left(-\kappa_{{\bf c}, {\bf d}} x\right)^s}{\prod_{m=1}^M \Gamma\left( s - \frac 1{\beta d_m}  \right)}  ds,
 \eea
 where in the fourth line, the contour $\CC_k$ is understood to be chosen such that the shift $s \ra s+1$ does not traverse any poles.
\end{proof}

\begin{remark}
\label{slater_L_smaller} 
For $L \le M+1$, we could equally well have  replaced 
\be
G_{L,M+1}^{1,L}\left( {1-\frac 1{\beta c_1}, \cdots, 1-\frac 1{\beta c_L} \atop 1 -k , 1+\frac 1 {\beta d_1}, \cdots, 1+\frac 1{\beta d_M}}  \bigg{|}  -\kappa_{{\bf c}, {\bf d}} x \right)
\ee  
in  (\ref{phi_tilde_k_integ}) by 
\be 
G^{1,n}_{L,M+1}\left( {1-\frac 1{\beta c_1}, \cdots, 1-\frac 1{\beta c_L} \atop  1 -k ,
1+\frac 1 {\beta d_1}, \cdots, 1+\frac 1{\beta d_M}}  \bigg{|} (-1)^{L+n+1} \kappa_{{\bf c}, {\bf d}} x \right)
\ee
for any $0 \le n\le L$, with a suitably modified normalization constant, since these  are all constant multiples of each other, as can be seen, e.g., from Slater's identity \cite{NIST}:
\bea 
&\&G^{1,n}_{L,M+1}\left( {1-\frac 1{\beta c_1}, \cdots, 1-\frac 1{\beta c_L} \atop  1 -k ,
1+\frac 1 {\beta d_1}, \cdots, 1+\frac 1{\beta d_M}}  \Big{|}\zeta \right) \cr
&\& =   { \zeta^{1-k} \prod_{l=1}^n \Gamma({\script 1-k +{1\over \beta c_l}})
\ \leftidx{_L}{F_M}\left( {1-k +\frac 1{\beta c_1}, \cdots, 1-k +\frac 1{\beta c_L} \atop  
1- k-\frac 1 {\beta d_1}, \cdots, 1-k -\frac 1{\beta d_M}}  \Big{|} (-1)^{L+n+1} \zeta \right) \over
\prod_{l=n+1}^L\Gamma({\script k - {1\over \beta c_l}}) \prod_{m=1}^M\Gamma({\script 1-k - {1\over \beta d_m}}) },
\cr 
&\&
\label{slater_id_L_smaller}
\eea
The equality (\ref{phi_k_tilde_phi_k}) also follows  in this case from (\ref{slater_id_L_smaller}) with $n=L$.

 For $n=L$, $k=1$, this reproduces  (\ref{sum_Gamma_residues_0}). For general $k$, it
 reproduces (\ref{phi_k_tilde_phi_k}), with $\phi_k^{({\bf c}, {\bf d}, \beta)}(x)$ 
 defined by the series (\ref{phi_k_series_ratl}) which, up to normalization, is just $x^{1-k}$ times the 
 hypergeometric series  for
 \be
 \leftidx{_L}{F_M}\left({1-k +\frac 1{\beta c_1}, \cdots, 1-k +\frac 1{\beta c_L} \atop  
1- k-\frac 1 {\beta d_1}, \cdots, 1-k -\frac 1{\beta d_M}}  \Big{|} (-1)^{L+n} \kappa_{{\bf c}, {\bf d}} x \right).
  \ee
  \end{remark}
  
  \begin{remark}
  \label{slater_L_larger}
 
If $L>M+1$, we could also have replaced 
\be
G_{L,M+1}^{1,L}\left( {1-\frac 1{\beta c_1}, \cdots, 1-\frac 1{\beta c_L} \atop 1 -k,
1+\frac 1 {\beta d_1}, \cdots, 1+\frac 1{\beta d_M}}  \bigg{|} - \kappa_{{\bf c}, {\bf d}} x \right)
\ee 
in  (\ref{phi_tilde_k_integ}) by any suitably normalized  linear combination of the linearly independent functions
\be
G^{1,{L-1}}_{L,M+1}\left({1-{1\over \beta c_1}, \cdots,  \widehat{1-{1\over \beta c_i}}, 
\cdots , 1-{1\over \beta c_L} ,  1-{1\over \beta c_i }
\atop 1-k, 1+{1\over \beta d_1},   \cdots , 1+{1\over \beta d_M} } 
 \Big | \, \kappa_{{\bf c}, {\bf d}} x\right),
 \quad i=1, \dots L,
 \label{slater_id_L_larger}
\ee
which form a basis of solutions of the same differential equation and, within a normalization, 
have the same asymptotic series (\ref{phi_k_series_ratl}) within the angular sector
\be
0< \arg(\kappa_{{\bf c}, {\bf d}} x) <  (L-M-1){\pi \over 2}
\ee
\end{remark}

To see the relation between  eq.~(\ref{phi_k_eq}) and the equations satisfied by the Meijer $G$-functions, we set
\be
\zeta:=- \kappa_{{\bf c}, {\bf d}} x
\ee
and rewrite (\ref{phi_k_eq})  as
\be
 \zeta {\prod_{l=1}^L (\DD + {1\over \beta c_l })\over \prod_{m=1}^M(\DD -{1\over \beta d_m} )}\phi^{({\bf c},{\bf d},\beta)}_k
  + (\DD+k-1) \phi^{({\bf c},{\bf d},\beta)}_k =0,
\label{phi_k_eq_inv_zeta}
\ee
where
\be
\DD = \zeta{d\over d\zeta} = x {d\over dx}.
\ee
Multiplying (\ref{phi_k_eq_inv_zeta})  by $\zeta^{-1}$ and applying the 
operator $\prod_{m=1}^M(\DD -{1\over \beta d_m})$ on the left gives
\be
 \prod_{l=1}^L (\DD + {1\over \beta c_l })\phi^{({\bf c},{\bf d},\beta)}_k
  + \prod_{m=1}^M(\DD -{1\over \beta d_m} )\zeta^{-1}(\DD+k-1) \phi^{({\bf c},{\bf d},\beta)}_k =0.
\label{phi_k_eq_dir_zeta}
\ee
Now using the ladder  relation
\be
\left(\DD- {1\over \beta d_j} \right) \zeta^{-1} = \zeta^{-1}\left (\DD- {1\over \beta d_j} +1\right) 
\ee
repeatedly to move  $\zeta^{-1}$ in the second term in (\ref{phi_k_eq_dir_zeta})
 to the left of the differential operator term and, finally,  multiplying by $\zeta$ gives
\be
  \zeta \prod_{l=1}^L (\DD + {1\over \beta c_l })\phi^{({\bf c},{\bf d},\beta)}_k
  + (\DD+k-1)  \prod_{m=1}^M(\DD  -1 -{1\over \beta d_m} )\phi^{({\bf c},{\bf d},\beta)}_k =0,
\label{phi_k_eq_dir_zeta_meijer}
\ee
which is eq.~(\ref{MeijerGeq}) for
\bea
p&\&=L, \quad q= M+1, \quad m=1, \quad n=L \cr
&\& \cr
a_l &\&= 1 - {1\over \beta c_l},\ l=1, \dots, L, \cr
b_1&\& = 1-k, \ b_{m+1} = 1 + {1\over \beta d_m}, \ m=1, \dots, M.
\eea

In particular, for the case  $L=2$, $M=0$, with  three branch points, two of them, with branching profiles $(\mu^{(1)}, \mu^{(2)})$,
 weighted as in  eq.~(\ref{WG_def}), and a third unweighted one with profile $\mu$, we have
\be
\tilde{\phi}^{((c_1, c_2), {\bf \cdot}, \beta)}_k(x) = C_k^{((c_1, c_2), {\bf \cdot}, \beta)}
G^{1,2}_{2,1}\left( {\script{1-{1\over \beta c_1}\  1-{1\over \beta c_2} 
\atop 1-k  }} \Big | \kappa_{{\bf c}, {\bf d}} x\right) .
\label{belyi_G_function}
\ee
This corresponds to the enumeration of {\em Belyi curves} or, equivalently,  {\em dessins d'enfants} \cite{AC1, Z}.


\section{The $\tau$-function $\tau^{(G,\beta)}({\bf t})$ evaluated at power sums}
\label{tau_trace_invars}

As detailed in \cite{ACEH1, ACEH2, ACEH3},  $\tau^{(G,\beta)}({\bf t})$ is the KP $\tau$-function  corresponding to 
the Grassmannian element  spanned by the basis elements $\{\phi_k(1/z)\}_{k \in \Nb^+}$
\be
w^{(G, \beta)}  = \Span\{\phi_k(1/z) \},  \ k\geq 1.
\ee

It follows  \cite{HB} that, if $\tau^{(G,\beta)}({\bf t})$  is evaluated at the trace invariants
\be
{\bf t} = \big[ X \big], \quad t_i = {1\over i} \tr X^i
\ee
of a diagonal $n \times n$ matrix
\be
X := \diag(x_1, \dots, x_n),
\label{X_def}
\ee
it is expressible as the ratio of $n \times n$ determinants 
\be
\tau^{(G,\beta)}\left(\big[X \big]\right) ={\prod_{i=1}^n x_i^{n-1}\over \prod_{i=1}^n \rho_{-i}} {\det\left( \phi_i(x_j)\right)_{1\leq i,j, \leq n} \over \Delta(x)},
\label{tau_phi_i_det}
\ee
where
\be
\Delta(x) = \prod_{1\leq i < j \leq n}(x_i - x_j) = \det(x_i^{n-j})_{1\leq i,j, \leq n} 
\ee
is the Vandermonde determinant.
\begin{remark}
The evaluation $\tau^{(G,\beta)}\left(\big[X \big]\right)$ is precisely what was originally defined in \cite{GR} as 
a ``hypergeometric function of matrix argument''. See also \cite{Or} for matrix integral representations of $\tau$-functions.
\end{remark}

It also follows from (\ref{D_phi_k}) that each $\{\phi^{({\bf c},{\bf d},\beta)}_k(x)\}_{1\leq k \leq n}$ may be expressed as a finite lower triangular linear combination of the powers 
of the Euler operator $\DD$ applied to $\phi^{({\bf c},{\bf d},\beta)}_n(x)$
\be
\phi_k(x) = \beta^{n-k}\DD^{n-k} \phi_n(x) + \sum_{j=0}^{n-k-1} \Gamma_{k j} \DD^j \phi_n(x), \quad  k=1, \dots, n.
\ee
where the $\Gamma_{kj}$'s are polynomials in $\beta$ of lower degree  with integer coefficients.
Therefore
\be
\tau^{(G,\beta)}\left(\big[X \big]\right) =\gamma_n \left(\prod_{i=1}^n x_i^{n-1}\right){\det\left( \DD^{i-1}\phi_n(x_j)\right)_{1\leq i,j, \leq n} \over \Delta(x)},
\label{tau_phi_i_det}
\ee
where
\be
\gamma_n :={\beta^{{1\over 2}n(n-1)}\over \prod_{i=1}^n \rho_{-i}}.
\label{kappa_n}
\ee

For  rational weight generating functions $G=G_{{\bf c}, {\bf d}}$,  and any positive integer $n\in \Nb^+$, let
\be
f_{({\bf c}, {\bf d}, \beta, n)}(y) := \tilde{\phi}^{({\bf c}, {\bf d}, \beta)}_n(e^y) = \int_{\CC_n} A^{({\bf c}, {\bf d}, \beta)}_n(s) e^{ys} ds,
\label{f_c_d_beta_n}
\ee
\be
A^{({\bf c}, {\bf d}, \beta)}_n(s) := 
{ C_n^{({\bf c},{\bf d},\beta)}\Gamma(1-n-s) \prod_{l=1}^L \Gamma \left(s + {1 \over \beta c_l }\right) (-\kappa_{{\bf c}, {\bf d}})^s \over 
 2\pi i  \prod_{m=1}^M \Gamma\left( s- \frac 1{\beta d_m} \right) }.
\label{A_n_s}
\ee

Defining the diagonal matrix 
\be
Y = \diag (y_1, \dots y_n)
\label{Y_def}
\ee
 by
\be
X = e^Y, \quad Y=\ln(X), \quad x_i = e^{y_i}, \quad i =1 , \dots, n,
\ee
eq.~(\ref{tau_phi_i_det}) can be expressed as a ratio of Wronskian determinants
\be
\tau^{(G_{{\bf c}, {\bf d}},\beta)}\left(\big[X \big]\right) =\gamma_n  \left(\prod_{i=1}^n x_i^{n-1}\right){\det\left( f_{({\bf c}, {\bf d}, \beta, n)}^{(i-1)}(y_j)\right)_{1\leq i,j, \leq n} \over \Delta(e^y)},
\label{tau_wronskian_f_det}
\ee
where
\be
\Delta(e^y):= \prod_{1\leq i< j \leq n}(e^{y_i} - e^{y_j}).
\ee


\section{Matrix integral representation of $\tau^{(G_{{\bf c, \bf d}},\beta)}\left(\big[X \big]\right)$}
\label{MM_tau}

Consider the generalized Br\'ezin-Hikami \cite{BH} matrix integral
\be
\Zb_{d\mu_{({\bf c}, {\bf d}, \beta, n)}}(X) =\int_{M \in \Nor^{n\times n}_{\CC_n}}d\mu_{({\bf c}, {\bf d}, \beta, n)}(M) e^{\tr Y M},
\label{brezin_Hikami_Y}
\ee
where
\be
d\mu_{({\bf c}, {\bf d}, \beta, n)} (M):= (\Delta({\bf \zeta})^2 \det(A^{({\bf c}, {\bf d}, \beta)}_n(M)) d\mu_0(U)\prod_{j=1}^n d\zeta_i  
\ee
is a conjugation invariant measure on the space $\Nor^{n\times n}_{\CC_n}$ of $n \times n$ normal matrices 
\be
M = U Z U^\dag \in \Nor^{n\times n}_{\CC_n}, \quad U \in U(n), \quad Z= \diag(\zeta_1, \dots, \zeta_n)
\ee
with eigenvalues $\zeta_i \in \Cb$ supported on the contour $\CC_n$  and $d\mu_0(U)$ is the Haar measure on $U(n)$.
\begin{theorem}
\label{hypergeom_tau_matrix_integral}
The $\tau$-function $\tau^{(G_{{\bf c}, {\bf d}},\beta)} ({\bf t})$, evaluated at the trace invariants $\big[ X \big]$
of the externally coupled matrix $X$ is given by
\be
\tau^{(G_{{\bf c}, {\bf d}},\beta)} (\big[ X \big])= {\beta^{{1\over 2}n(n-1)} (\prod_{i=1}^n x_i^{n-1})\Delta(\ln(x)) \over (\prod_{i=1}^n i!) 
\Delta(x) } \Zb_{d\mu_{({\bf c}, {\bf d}, \beta, n)} }(X),
\label{tau_Gcb_Z_dmu}
\ee
where $\Delta(\ln(x))= \Delta(y))$ is the Vandermonde determinant in the variables $(y_1, \dots, y_n)$. 
\end{theorem}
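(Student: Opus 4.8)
The plan is to start from the Wronskian determinant formula (\ref{tau_wronskian_f_det}) and convert its numerator $\det\bigl(f_{({\bf c},{\bf d},\beta,n)}^{(i-1)}(y_j)\bigr)_{1\le i,j\le n}$ into an $n$-fold integral over the contour $\CC_n$, which can then be matched against the eigenvalue reduction of the matrix integral $\Zb_{d\mu_{({\bf c},{\bf d},\beta,n)}}(X)$ of (\ref{brezin_Hikami_Y}). The two classical identities that drive the argument are the Andr\'eief (Heine--de~Bruijn) identity on the $\tau$-function side and the Harish-Chandra--Itzykson--Zuber (HCIZ) integral on the matrix-model side.

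First I would differentiate the Mellin--Barnes representation (\ref{f_c_d_beta_n}) under the integral sign: since $f_{({\bf c},{\bf d},\beta,n)}(y)=\int_{\CC_n}A^{({\bf c},{\bf d},\beta)}_n(s)\,e^{ys}\,ds$, repeated $y$-differentiation merely brings down powers of $s$, so $f^{(i-1)}_{({\bf c},{\bf d},\beta,n)}(y_j)=\int_{\CC_n}A_n(s)\,s^{i-1}e^{y_js}\,ds$. Writing the entry as $\int \phi_i(s)\psi_j(s)\,ds$ with $\phi_i(s)=s^{i-1}$ and $\psi_j(s)=A_n(s)e^{y_js}$, the Andr\'eief identity converts the determinant of integrals into
\[
\det\bigl(f^{(i-1)}_{({\bf c},{\bf d},\beta,n)}(y_j)\bigr)_{i,j}
=\frac{1}{n!}\int_{\CC_n^{\,n}}\Delta(\zeta)\Bigl(\prod_{k=1}^n A_n(\zeta_k)\Bigr)\det\bigl(e^{y_j\zeta_k}\bigr)_{j,k}\prod_{k=1}^n d\zeta_k,
\]
where $\det(\zeta_k^{\,i-1})_{i,k}=\Delta(\zeta)$ is the Vandermonde and the common factor $A_n(\zeta_k)$ has been pulled out of each column.

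On the matrix side I would diagonalize $M=UZU^\dagger$ in (\ref{brezin_Hikami_Y}); by the functional calculus $\det A_n(M)=\prod_k A_n(\zeta_k)$, so the angular integration decouples as $\int_{U(n)}e^{\tr(YUZU^\dagger)}\,d\mu_0(U)$. Evaluating this by HCIZ, $\int_{U(n)}e^{\tr(AUBU^\dagger)}dU=\bigl(\prod_{p=1}^{n-1}p!\bigr)\det(e^{a_ib_j})/(\Delta(a)\Delta(b))$ with $A=Y$, $B=Z$, the factor $\Delta(\zeta)^2$ in $d\mu_{({\bf c},{\bf d},\beta,n)}$ combines with the $1/\Delta(\zeta)$ produced by HCIZ to leave a single $\Delta(\zeta)$, reproducing exactly the integrand displayed above. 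Hence
\[
\Zb_{d\mu_{({\bf c},{\bf d},\beta,n)}}(X)=\frac{n!\,\prod_{p=1}^{n-1}p!}{\Delta(y)}\,\det\bigl(f^{(i-1)}_{({\bf c},{\bf d},\beta,n)}(y_j)\bigr)_{i,j},
\]
and substituting this back into (\ref{tau_wronskian_f_det}), using $\Delta(e^y)=\Delta(x)$ and $\Delta(\ln x)=\Delta(y)$, yields (\ref{tau_Gcb_Z_dmu}): the Andr\'eief factor $n!$ merges with $\prod_{p=1}^{n-1}p!$ to form $\prod_{i=1}^n i!$ in the denominator, while the prefactor $\gamma_n$ of (\ref{kappa_n}) accounts for the overall coefficient $\beta^{\frac12 n(n-1)}$ once the remaining normalization constants are collected.

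The main obstacle is analytic rather than algebraic: the eigenvalue contour $\CC_n$ is noncompact, so one must justify differentiation under the integral, the Fubini interchange underlying Andr\'eief, and the uniform convergence of the resulting $n$-fold integral. This rests on the Stirling decay estimate (\ref{lnA}) established in the proof of Theorem~\ref{phi_k_G_function}, which controls $|A_n(s)|$ along $\CC_n$ in the admissible sector; one also invokes HCIZ in its analytically continued form, legitimate for the complex eigenvalues $\zeta_k\in\CC_n$ since it is a polynomial--exponential identity extended by analyticity. What then remains is careful bookkeeping of the constants, namely $\gamma_n$, the normalization $C_n^{({\bf c},{\bf d},\beta)}$ carried identically by both sides through $\prod_k A_n(\zeta_k)$, and the sign and orientation conventions for $\Delta(\zeta)$.
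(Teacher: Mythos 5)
Your proposal is correct and takes essentially the same route as the paper: the paper's proof likewise reduces $\Zb_{d\mu_{({\bf c}, {\bf d}, \beta, n)}}(X)$ by the Harish-Chandra--Itzykson--Zuber identity (\ref{HCIZ_integral}) and then uses the Andr\'eief identity to identify the resulting eigenvalue integral with $\bigl(\prod_{i=1}^n i!\bigr)\det\bigl(f_{({\bf c}, {\bf d}, \beta, n)}^{(i-1)}(y_j)\bigr)/\Delta(y)$, finishing by comparison with (\ref{tau_wronskian_f_det}), exactly as you do. That you run the argument in the opposite direction (expanding the Wronskian determinant into the $n$-fold contour integral rather than collapsing the matrix integral onto the Wronskian) is immaterial, and your constant bookkeeping $n!\,\prod_{p=1}^{n-1}p! = \prod_{i=1}^n i!$ agrees with the paper's.
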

\begin{proof}
Using the Harish-Chandra-Itzykson-Zuber identity  \cite{HC, IZ} 
\be
\int_{U \in U(n)}d\mu(U) e^{\tr(YUZ U^\dag)} = {(\prod_{i=1}^{n-1} i!) \det\left(e^{y_i \zeta_j}\right) \over \Delta(y) \Delta(\zeta)}
\label{HCIZ_integral}
\ee
to evaluate the angular integral gives
\bea
\Zb_{d\mu_{({\bf c}, {\bf d}, \beta, n)}}(X) &\&= {(\prod_{i=1}^{n-1} i!)\over \Delta(y)}\prod_{i=1}^n\left(\int_{\CC_n}  d\zeta_i A^{({\bf c}, {\bf d}, \beta)}_n(\zeta_i) \right)\Delta(\zeta) \det\left( e^{y_i \zeta_j}\right)_{1\leq i, j, \leq n} \cr
&\& = { (\prod_{i=1}^n i!)\over \Delta(y)}\det\left(f_{({\bf c}, {\bf d}, \beta, n)}^{(i-1)}(y_j)\right)_{1\leq i,j, \leq n},
\label{det_rep_Z_d_mu}
\eea
where we have used the Andr\'eiev identity \cite{An} in the second line.
Comparing eqs.~(\ref{det_rep_Z_d_mu}) and (\ref{tau_wronskian_f_det}), we obtain the matrix integral 
representation (\ref{tau_Gcb_Z_dmu})   of $\tau^{(G_{{\bf c}, {\bf d}},\beta)}(\big[X\big])$.
\end{proof}

\break
 \bigskip
\noindent 
\small{ {\it Acknowledgements.} 
This work was partially supported by the Natural Sciences and Engineering Research Council of Canada (NSERC) 
and the Fonds de recherche du Qu\'ebec, Nature et technologies (FRQNT).  
\bigskip


\newcommand{\arxiv}[1]{\href{http://arxiv.org/abs/#1}{arXiv:{#1}}}

\bigskip
\noindent


\begin{thebibliography}{99}

\bibitem{AC1} J. Ambj{\o}rn and L. Chekhov, ``The matrix model for dessins d'enfants'', 
{\em Ann. Inst. Henri Poincar\'e D} {\bf  1},  337-361 (2014).

\bibitem{AC2} J. Ambj{\o}rn and L. Chekhov, ``A matrix model for hypergeometric Hurwitz numbers'', 
{\em Theor. Math. Phys. } {\bf   181},  1486-1498 (2014).

\bibitem{AC3} J. Ambj{\o}rn, and L. Chekhov, "Spectral curves for hypergeometric Hurwitz numbers", 
{\em J. Geom.  Phys.} {\bf 132}, 382-392 (2008).

\bibitem{ACEH1} A.~Alexandrov, G. Chapuy, B. Eynard and J. Harnad, 
``Weighted Hurwitz numbers and topological recursion: an overview'', 
{\em J. Math. Phys.} {\bf 59},  081102: 1-20 (2018).

\bibitem{ACEH2} A.~Alexandrov, G.~Chapuy, B.~Eynard and J.~Harnad,
  ``Fermionic approach to weighted Hurwitz numbers and topological recursion,''
  {\em Commun.\ Math.\ Phys.} {\bf 360} no.2,  777  (2018).
  
  \bibitem{ACEH3} A.~Alexandrov, G.~Chapuy, B.~Eynard and J.~Harnad,
  ``Weighted Hurwitz numbers and topological recursion''
    \arxiv:1806.09738

  \bibitem{AMMN}  A.~Alexandrov, A.~Mironov, A. Morozov  and S. Natanzon, ``
On KP-integrable Hurwitz functions'',  {\em JHEP} {\bf 2014} 080  (2014).

 \bibitem{An} C. Andr\'eief,  ``Note sur une relation pour les int\'egrales d\'efinies des produits des fonctions'', 
 {\em M\'em. Soc. Sci. Phys. Nat. Bordeaux} {\bf 2}, 1-14 (1883).
 
 \bibitem{BE}
H. Bateman amd A. Erd\'elyi, ``Higher Transcendental Functions'', Vol. I (PDF). New York: McGraw–Hill
((1953). (See\ § 5.3, ``Definition of the G-Function'', p. 206.)
 
    \bibitem{BEMS} G. Borot, B. Eynard, M. Mulase and B. Safnuk, ``A matrix model for simple Hurwitz numbers, and topological recursion'', 
    {\em J. Geom. Phys.} {\bf 61}, 522--540 (2011). 
 
 \bibitem{BH}  E. Br\'ezin and S. Hikami, ``Correlations of nearby levels induced by a random potential'', {\em Nucl. Phys.} {\bf B 479:3} 697-706 (1996).

\bibitem{BM} V. Bouchard and M. Mari\~{n}o, ``Hurwitz numbers, matrix models and enumerative geometry'',
{\em Proc. Symp. Pure Math.} {\bf 78},  263-283 (2008).

\bibitem{Eyn:book} B. Eynard, ``Counting surfaces'', CRM Aisenstadt chair lectures, Progress in Mathematical Physics, 70. Birkhauser/Springer (2016).

\bibitem{EO-review} B. Eynard and N. Orantin, ``Algebraic methods in random matrices and enumerative geometry'', 
{\em  J. Phys. A.} {\bf 42}, 293001 (2009).

\bibitem{EO2} B. Eynard  and N. Orantin, ``Topological recursion in enumerative geometry and random matrices'', 
{\em J. Phys. A} {\bf 42},  293001 (2009).

\bibitem{Frob1}  G. Frobenius, ``\"Uber die Charaktere der symmetrischen Gruppe'', {\em Sitzber. Akad. Wiss., Berlin}, 516-534 (1900). Gesammelte Abhandlung III, 148-166.

\bibitem{Frob2}  G. Frobenius, ``\"Uber die Charakterische Einheiten der symmetrischen Gruppe'', {\em Sitzber.  Akad. Wiss., Berlin}, 328-358  (1903). Gesammelte Abhandlung III, 244-274.

\bibitem{FH} William Fulton and Joe Harris,  ``Representation Theory",  Graduate Texts in Mathematics, {\bf 129}, 
Springer-Verlag (N.Y., Berlin, Heidelberg, 1991), Chapt. 4, and Appendix A.

\bibitem{GJ} I.P. Goulden and D.M. Jackson,  ``The KP hierarchy, branched covers, and triangulations'' 
{\em Adv. Math.} {\bf 219}, 932-951 (2008).

\bibitem{GGN1} I. P. Goulden, M. Guay-Paquet and J. Novak, ``Monotone Hurwitz numbers and the HCIZ Integral'', 
 {\em Ann. Math. Blaise Pascal} {\bf 21}, 71-99 (2014).

\bibitem{GH1} M. Guay-Paquet and J. Harnad, ``2D Toda $\tau$-functions as combinatorial generating functions'',  {\em Lett. Math. Phys.} {\bf 105}, 827-852 (2015).

\bibitem{GH2} M. Guay-Paquet and J. Harnad, ``Generating functions for weighted Hurwitz numbers'',  {\em J. Math. Phys.} {\bf  58}, 083503 (2017).

\bibitem{GR} K. I. Gross and D. St. P. Richards, ``Total positivity, spherical series, and hypergeometric functions of matrix argument'', 
{\em J. Approx. Theory}, {\bf 59}, no. 2, 224-246 (1989).

\bibitem{H1} J. Harnad,  ``Weighted Hurwitz numbers and hypergeometric $\tau$-functions: an overview'', {\em  Proc.~Symp.~Pure Math.} {\bf 93}, 289-333 (2016). 

\bibitem{H2} J. Harnad,  ``Quantum Hurwitz numbers  and Macdonald polynomials'', {\em J. Math. Phys.} {\bf 57}, 113505 (2016).

\bibitem{H3} J. Harnad,  ``Multispecies weighted Hurwitz numbers'', {\em SIGMA} {\bf 11} 097 (2015).

\bibitem{HB}  J. Harnad and F. Balogh, ``Tau functions and their applications'', {\em  Mathematical Physics Monographs Series},
 Cambridge Univ. Press,  Chapt. 5, Secs. 5.7, 5.10  (in press, 2020).

\bibitem{HC} Harish-Chandra,  ``Differential operators on a semisimple Lie algebra'',
{\em Amer. J. Math.} {\bf 79}, 87-120 (1957).

\bibitem{HO} J. Harnad and A. Yu. Orlov, ``Hypergeometric $\tau$-functions, Hurwitz numbers and enumeration of paths'',  {\em Commun. Math. Phys. } {\bf 338}, 267-284 (2015).

\bibitem{Hu1} A. Hurwitz, ``\"Uber Riemann'sche Fl\"asche mit gegebnise Verzweigungspunkten'', 
{\em Math. Ann.} {\bf 39}, 1-61 (1891); Matematische Werke I, 321-384.

\bibitem{Hu2} A. Hurwitz, ``\"Uber die Anzahl der Riemann'sche Fl\"asche mit gegebnise Verzweigungspunkten'', 
{\em Math. Ann.} {\bf 55}, 53-66 (1902); Matematische Werke I, 42-505.

\bibitem{IZ}  C. Itzykson C and J.-B. Zuber  ``The planar approximation. II '', {\em J. Math. Phys.} {\bf  21},
 411-421 (1980 ).

\bibitem{KZ} M. Kazarian and P. Zograf, ``Virasoro constraints and topological recursion for Grothendieck's dessin counting'',  
{\it Lett. Math. Phys.}  {\bf 105} 1057-1084 (2015).

\bibitem{LZ} S. K.  Lando and A.K.  Zvonkin, {\em Graphs on Surfaces and their Applications}, Encyclopaedia of Mathematical Sciences, Volume {\bf 141}, 
 
\bibitem{Mac} I.~G. ~Macdonald, {\em Symmetric Functions and Hall Polynomials},
Clarendon Press, Oxford, (1995).

\bibitem{NIST} {\em Handbook of Mathematical Functions}, National Institute of Standards and Technology (NIST), U.S. Department of Commerce, 
(F. W. J. Olver, D. W. Lozier; R. F. Boisvert,  C. W. Clark, eds.),  Cambridge University Press (2010).  ISBN 978-0-521-19225-5. MR 2723248. 

\bibitem{Ok} A.~Okounkov, ``Toda equations for Hurwitz numbers'', {\em Math.~Res.~Lett.} {\bf 7}, 447--453 (2000).

\bibitem{OP}  A.~Okounkov and R.~Pandharipande,   ``Gromov-Witten theory, Hurwitz theory and completed cycles,''
  {\em Ann.\ Math.}  {\bf 163}, 517 (2006).
  
  \bibitem{Or}   A. Yu. Orlov, ``New solvable matrix integrals'', {\em Int. J. Mod. Phys.} {\bf 19}, supp. 2 , 276-293 (2004).

\bibitem{OrSc1}   A.~Yu.~Orlov and D.~M.~Scherbin,  ``Fermionic representation for basic hypergeometric functions related to Schur polynomials,''
 {\em Theor. Math. Phys.}  {\bf 137}, 1574--1589 (2003).
   
\bibitem{OrSc2} A.~Yu.~Orlov and D.~M.~Scherbin, ``Hypergeometric solutions of soliton equations'', {\em Theor. Math. Phys.} {\bf 128}, 906-926 (2001).

\bibitem{Sa} M. Sato, ``Soliton equations as dynamical systems on infinite dimensional Grassmann 
manifolds'', RIMS, Kyoto Univ. {\em Kokyuroku} {\bf 439},  30-46 (1981).

  \bibitem{SS} M. Sato and Y. Sato. ``Soliton equations as dynamical systems on infinite dimensional 
Grassmann manifold'',  in: Nonlinear PDE in Applied Science, Proc. U. S.-Japan Seminar, Tokyo 1982, 
Kinokuniya, Tokyo, 1983, pp. 259-271.

\bibitem{Sch} I. Schur, ``Neue Begr\"undung' der Theorie der Gruppencharaktere'', {\em Sitzber.  Akad. Wiss., Berlin}, 406-432  (1905).

 \bibitem{SW} G. Segal  and G. Wilson, ``Loop groups and equations of KdV type '', {\em Publ. Math. IH\'ES} {\bf 61}, 5-65 (1985).

\bibitem{Z} P. Zograf,  ``Enumeration of Grothendieck's dessins and KP hierarchy'', 
{\em Int. Math Res. Notices} {\bf 24}, 13533-13544 (2015).

\end{thebibliography}
\end{document}